\newtheorem{theorem}{Theorem}
\newtheorem{remark}{Remark}
\newtheorem{definition}{Definition}
\newtheorem{lemma}{Lemma}
\newtheorem{proof}{Proof}
\begin{document}
\title{Secret key-based Authentication with a Privacy Constraint}
\IEEEoverridecommandlockouts
\author{
\authorblockN{Kittipong Kittichokechai and Giuseppe Caire\\
\authorblockA{Technische Universit\"{a}t Berlin}
}
}
\maketitle
\begin{abstract}
 We consider problems of authentication using secret key generation under a privacy constraint on the enrolled source data.
 An adversary who has access to the stored description and correlated side information tries to deceive the authentication as well as learn about the source.
  We characterize the optimal tradeoff between the compression rate of the stored description, the leakage rate of the source data,
  and the exponent of the adversary's maximum false acceptance probability.
  The related problem of secret key generation with a privacy constraint is also studied where
  the optimal tradeoff between the compression rate, leakage rate, and secret key rate is characterized.
  It reveals a connection between the optimal secret key rate and security of the authentication system.
\end{abstract}
\section{Introduction}\label{sec:Chap_ENDUSER_introduction}

We consider the problem of authentication based on secret key generation.
In the enrollment stage, a user provides the source sequence $X^n$ to the system. The source
is compressed into a description $M$ which is stored as a helping message.
Meanwhile, the secret key message $S$ is generated based on the source and will be used
as a reference for authentication.
In the authentication stage, the user provides an authentication sequence $Y^n$ which could be a noisy measurement of the enrolled source sequence. Based on $M$ and $Y^n$,
the secret key is estimated as $\hat{S}$ and compared with
the reference $S$. The user is successfully authenticated if $\hat{S}=S$.

The system described above can be relevant in several applications including those involving access control, secure, and trustworthy communication. One important class of potential applications is related to using biometric data such as fingerprint, iris scans, and DNA sequences for authentication (see, e.g., \cite{RaneSurvey} and references therein). Unlike passwords, the biometric data inherently belong to users and provide a convenient and seemingly more secure way for authentication. However, it is crucial that privacy of the enrolled data must be protected from any inference of an adversary. The privacy risk in this case is of potentially high impact since the biometric data is commonly tied to the person identity. If it is compromised, it cannot be reverted or changed like in the case of using passwords.
\begin{figure}[t]
    \centering
    \psfrag{x}[][][0.9]{\small{$X^{n}$}}
    \psfrag{y}[][][0.9]{\small{$Y^{n}$}}
    \psfrag{z}[][][0.9]{\small{$Z^{n}$}}
    \psfrag{s}[][][0.9]{\small{$S$}}
    \psfrag{m}[][][0.9]{\small{$M,\ \color{blue}\text{rate}\ R$}}
    \psfrag{shat}[][][0.9]{\small{$\hat{S},\ \color{blue}\text{Pr}(\hat{S} \neq S) \leq \delta$}}
    \psfrag{enc}[][][0.9]{\small{Encoder}}
    \psfrag{dec}[][][0.9]{\small{Decoder}}
    \psfrag{adv}[][][0.9]{\small{Adversary}}
    \psfrag{yt}[][][0.9]{\small{$\tilde{y}^{n}$}}
    \psfrag{L}[][][0.9]{\small{\color{blue}$\frac{1}{n}I(X^n;M,Z^n) \leq L+\delta$}}
    \psfrag{stilde}[][][0.9]{\small{$\tilde{S}_{\tilde{y}^n}$}}
    \psfrag{fap}[][][0.9]{\small{$\color{blue}\text{Pr}(\tilde{S}_{\tilde{y}^n} = S) \leq 2^{-n(E-\delta)}$}}
    \includegraphics[width=7.5cm]{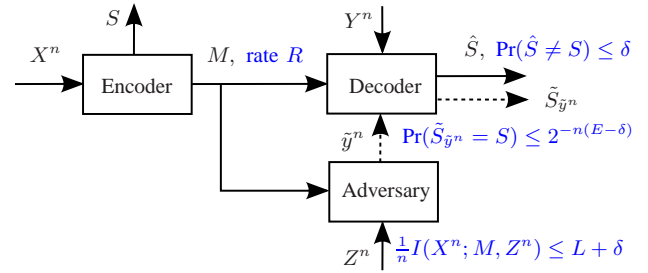}
    \caption{Secret key-based authentication system with a privacy constraint.}
    \vspace{-0.2cm}
    \label{fig:mFAP}
\end{figure}

In this work, we consider the secret-key based authentication problem in the presence of an adversary, who has access to the stored description $M$ as well as correlated side information $Z^n$, as shown in Fig. \ref{fig:mFAP}.
The adversary tries to deceive the authentication using its own sequence $\tilde{y}^n$ and is also interested in learning about the enrolled source data $X^n$. We call the event where the legitimate user fails during the authentication as a false rejection, and the event where the system accepts the adversary as a false acceptance. As for the privacy constraint, normalized mutual information between the enrolled source data $X^n$ and all information available at the adversary, e.g., $(M,Z^n)$, is used as a measure of information leakage rate. We wish to design an authentication system that achieves negligible false rejection probability and at the same time minimizes 1) the compression rate of the stored description, 2) the leakage rate of the enrolled source, and 3) the maximum false acceptance probability (mFAP) \emph{exponentially}. In general, there exists a tradeoff between the compression rate, the information leakage rate, and the mFAP exponent. For example, to obtain a large mFAP exponent while achieving reliable authentication for the legitimate user, a ``high quality" description $M$ may need to be stored which in turn can lead to high amount of information leakage. The main result of this work is a single-letter characterization of the fundamental tradeoff between the compression rate, information leakage rate, and mFAP exponent for discrete memoryless sources.

Closely related to the setting described above, we consider also the problem of secret key generation (for authentication) with a privacy constraint where, apart from reliable reconstruction of the secret key, we wish to maximize the secret key rate as well as ensuring that the leakage rate of the key is negligible. Also in this case, the optimal tradeoff between the compression rate, leakage rate of the source, and secret key rate is characterized.
In particular, the optimal secret key rate is shown to be equivalent to the optimal mFAP exponent derived in the first problem.

\vspace{-0.06cm}
\subsection*{Related Work}
Authentication problems from an information theoretic perspective have been studied in several directions. Maurer in \cite{Maurer} considered the message authentication problem in connection with the hypothesis testing problem where the underlying message probability distributions of the legitimate user and adversary are assumed to be different. Martinian et al. \cite{Martinian} considered authentication with a distortion criteria. More recently, works appear to consider authentication problems based on secret key generation \cite{AhlswedeCsiszar}. These include for example works \cite{IgnatenkoWillems},\cite{WillemsIgnatenko},\cite{LaiHoPoor} which focused on biometric authentication systems
 where privacy of the enrolled data is also taken into account. Analysis of deception probability in the authentication system from an adversary's perspective was also considered in \cite{Kang}. Closely related to the secret key-based authentication problem with privacy constraint are the problems of source coding with privacy constraint, e.g.,  \cite{VillardPiantanida},\cite{Kittichokechai}, where the goals are to reconstruct the source reliably while preserving the privacy of the source or the reconstruction sequences from any inference of an eavesdropper. In this work, we extend the problem in \cite{WillemsIgnatenko} to a more general case where the adversary has correlated side information. Moreover, we provide a complete characterization of the problem studied in \cite{LaiHoPoor}. Standard notations in \cite{ElGamalKim} are used.
\section{Secret key-based Authentication System}

\subsection{Problem Formulation}\label{sec:problem_setting_mFAP}
Let us consider a secret key-based authentication system  shown in Fig. \ref{fig:mFAP}. Source and side information alphabets, $\mathcal{X}, \mathcal{Y}, \mathcal{Z}$ are assumed to be finite. Let $(X^{n},Y^n,Z^n)$ be $n$-length sequences which are i.i.d. according to $P_{X,Y,Z}$.

 In the enrollment stage, based on the user's source sequence $X^n$, an ``encoder" generates a rate-limited description $M \in \mathcal{M}^{(n)}$
 and a secret key message $S \in \mathcal{S}^{(n)}$.
 For authentication, the user provides a (noisy) authentication sequence $Y^n$ to the system. Based on $Y^n$ and the stored description $M$, a ``decoder" generates $\hat{S}$ as an estimate of the secret key. The user will be positively authenticated if $\hat{S}=S$.

 The information leakage rate at the adversary who has access to the stored description $M$ and
 side information $Z^n$, correlated with $X^n$,
 is measured by the normalized mutual information $I(X^n;M,Z^n)/n$. The adversary,
 based on $M$ and $Z^n$, also chooses a sequence $\tilde{y}^n(M,Z^n) \in \mathcal{Y}^n$ for authentication.
 The maximum false acceptance probability (mFAP) is defined as $\text{mFAP} \triangleq \max_{\tilde{y}^n(M,Z^n) \in \mathcal{Y}^n} \text{Pr}(\tilde{S}_{\tilde{y}^n}=S)$, where $\tilde{S}_{\tilde{y}^n}$ is the estimate resulting from $M$ and $\tilde{y}^n$.
 We are interested in characterizing the optimal tradeoff between the compression rate, information leakage rate, and mFAP exponent.

\begin{definition}\label{def:code_mFAP}
A code for secret key-based authentication with a privacy constraint consists of
\begin{itemize}
  \item an encoder $f_m^{(n)}: \mathcal{X}^{n} \rightarrow \mathcal{M}^{(n)}$,
  \item an encoder $f_s^{(n)}: \mathcal{X}^{n} \rightarrow \mathcal{S}^{(n)}$,
  \item a decoder $g^{(n)}: \mathcal{M}^{(n)} \times \mathcal{Y}^{n} \rightarrow \mathcal{S}^{(n)}$,
\end{itemize}
where  $\mathcal{M}^{(n)}$ and $\mathcal{S}^{(n)}$ are  finite sets.
\end{definition}

\begin{definition}  A compression-leakage-mFAP exponent tuple $(R,L,E) \in \mathbb{R}^{3}_{+}$ is said to be \emph{achievable} if for any $\delta>0$ and all sufficiently large $n$ there exists a code above such that
\begin{align}
\text{Pr}(\hat{S}\neq S)&\leq \delta, \label{eq:FRPconstraint}\\
\frac{1}{n}\log\big|\mathcal{M}^{(n)}\big| &\leq R+\delta,\label{eq:rate_constraint}
\end{align}
\begin{align}
\frac{1}{n}I(X^{n};M,Z^n) &\leq L+\delta,\label{eq:leakage_constraint}\\
\text{and} \ \  \frac{1}{n}\log \frac{1}{\text{mFAP}} &\geq E-\delta. \label{eq:mFAPconstraint}
\end{align}
The \emph{compression-leakage-mFAP exponent} region $\mathcal{R}_1$ is the set of all achievable tuples.
\end{definition}

\subsection{Result}

\begin{theorem}\label{theorem:region_exponent}
The compression-leakage-mFAP exponent region $\mathcal{R}_1$ for the problem depicted in Fig. \ref{fig:mFAP} is given by a set of all tuples $(R,L,E)\in \mathbb{R}^{3}_{+}$ such that
\begin{align}
R &\geq I(X;V|Y), \\
L & \geq I(X;V,Y) - I(X;Y|U)+I(X;Z|U),\\
E &\leq I(V;Y|U)-I(V;Z|U),
\end{align}
for some joint distributions of the form $P_{X,Y,Z}P_{V|X}P_{U|V}$
with  $|\mathcal{U}| \leq |\mathcal{X}| +3, |\mathcal{V}| \leq (|\mathcal{X}|+3)(|\mathcal{X}|+2)$.
\end{theorem}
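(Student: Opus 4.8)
The plan is to prove the theorem by separately establishing achievability and a matching converse, and to exploit throughout the announced equivalence between the mFAP exponent and the secret key rate. The mechanism linking the two is that the adversary's tag $\tilde{S}_{\tilde{y}^n}$ is a deterministic function of its observation $(M,Z^n)$, so its best guessing strategy gives
\[
\text{mFAP} \;=\; \sum_{m,z^n} P(m,z^n)\max_{s}\Pr(S=s\mid m,z^n) \;\ge\; \sum_{m,z^n} P(m,z^n)\,2^{-H(S\mid m,z^n)} \;\ge\; 2^{-H(S\mid M,Z^n)},
\]
where $H(S\mid m,z^n)$ is the entropy of the conditional law $\Pr(S=\cdot\mid m,z^n)$, the first inequality is $\max_s p_s \ge 2^{-H}$ (min-entropy dominated by Shannon entropy), and the last is Jensen's inequality. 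Hence any achievable exponent obeys $E \le \tfrac1n H(S\mid M,Z^n)+\delta$, reducing the mFAP bound to a secret-key equivocation bound; conversely, a key that is asymptotically uniform and independent of $(M,Z^n)$ drives $\text{mFAP}$ down to $2^{-nE}$.

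For achievability I would use superposition Wyner--Ziv coding followed by privacy amplification. Generate $u^n$ codewords i.i.d.\ $\sim\prod P_U$ and, superimposed on each, $v^n$ codewords $\sim\prod P_{V|U}$, consistent with the Markov chain $U-V-X-(Y,Z)$. The encoder covers $X^n$ by a jointly typical pair $(u^n,v^n)$ (covering rates $I(X;U)$ and $I(X;V|U)$) and stores the two Wyner--Ziv bin indices as $M$, at combined rate $I(U;X|Y)+I(V;X|U,Y)=I(X;V|Y)$, so that the legitimate decoder, aided by $Y^n$, recovers $(u^n,v^n)$ and hence the key with vanishing error. The secret key $S$ is extracted from the $v^n$-index by an independent hash at rate $I(V;Y|U)-I(V;Z|U)$; a standard privacy-amplification argument makes $S$ nearly uniform and nearly independent of $(M,Z^n)$, which yields the mFAP exponent. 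The leakage bound follows by evaluating $\tfrac1n I(X^n;M,Z^n)$ for this construction and simplifying with the Markov relations.

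For the converse, fix a code and introduce the single-letter auxiliaries in the usual way, e.g.\ $V_i=(M,Y^{i-1},Z_{i+1}^n)$ together with a sub-collection $U_i$ and a time-sharing index, verifying that the required chain $U_i-V_i-X_i-(Y_i,Z_i)$ holds by memorylessness. The rate bound $\tfrac1n\log|\mathcal{M}|\ge\tfrac1n H(M)\ge I(X;V|Y)$ and the leakage bound follow by expanding $H(M)$ and $I(X^n;M,Z^n)$ and applying the Csisz\'ar sum identity to turn the $Y^{i-1}$/$Z_{i+1}^n$ cross terms into single-letter conditional informations. The mFAP bound combines $E\le\tfrac1n H(S\mid M,Z^n)+\delta$ with Fano's inequality from \eqref{eq:FRPconstraint} and the same single-letterization to yield $E\le I(V;Y|U)-I(V;Z|U)$. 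The cardinality bounds are then obtained from the support lemma, preserving $P_X$ together with the three rate functionals, which gives $|\mathcal{U}|\le|\mathcal{X}|+3$ and, applying it once more to $V$ conditioned on each value of $U$, $|\mathcal{V}|\le(|\mathcal{X}|+3)(|\mathcal{X}|+2)$.

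The hard part will be twofold. First is the mFAP analysis on both sides: in achievability one must show that no data-dependent choice $\tilde{y}^n(M,Z^n)$ beats the uniform guess, which requires the independence of $S$ from $(M,Z^n)$ in the strong sense controlling $\sum_{m,z^n}P(m,z^n)\max_s\Pr(S=s\mid m,z^n)$ rather than merely normalized mutual information; in the converse one must make $\text{mFAP}\ge 2^{-H(S\mid M,Z^n)}$ interact cleanly with Fano so the $\delta$-slack closes. Second is the leakage single-letterization: the target $I(X;V,Y)-I(X;Y|U)+I(X;Z|U)$ mixes a Wyner--Ziv description term with oppositely signed $Y$- and $Z$-informations, and matching these signs and the $U$-versus-$V$ conditioning is exactly where the choice of auxiliaries and the Csisz\'ar sum identity must be aligned.
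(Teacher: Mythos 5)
Your overall skeleton coincides with the paper's (layered/superposition Wyner--Ziv coding for achievability, auxiliary identification plus the Csisz\'ar sum identity for the converse, support lemma for cardinalities), but two of your steps fail as stated, and they are exactly where the real work lies. First, your opening identity
\[
\text{mFAP} \;=\; \sum_{m,z^n} P(m,z^n)\max_{s}\Pr(S=s\mid m,z^n)
\]
is false: the adversary cannot announce an arbitrary key value; it submits a sequence $\tilde y^n$ and the system computes $g^{(n)}(m,\tilde y^n)$, so the only keys it can induce lie in the image set $\mathcal{C}(m)=\{s:\exists\, y^n,\ g^{(n)}(m,y^n)=s\}$. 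For an arbitrary code in the converse, $\mathcal{C}(m)$ may be a small subset of $\mathcal{S}^{(n)}$ carrying little posterior mass, so the chain $\text{mFAP}\ge 2^{-H(S\mid M,Z^n)}$ does not follow. The repair is not a ``$\delta$-slack'' issue: one must invoke the false-rejection constraint \eqref{eq:FRPconstraint} to get $\Pr\big(S\in\mathcal{C}(M)\big)\ge 1-\delta_n$, condition on the indicator $C$ of that event, and only then pass from the restricted maximum to $H(S|M,Z^n,C=1)$ and on to $H(S|M,Z^n)$ --- this is precisely the argument of Appendix \ref{appendix:converse_mFAP}, Eqs. \eqref{eq:MAP_strategy_converse}--\eqref{eq:Econverse_end}. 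Second, your auxiliaries cannot complete the exponent bound: you take $V_i=(M,Y^{i-1},Z_{i+1}^n)$, which does not contain $S$. The quantity to single-letterize is $H(S|M,Z^n)-H(S|M,Y^n)$, whose Csisz\'ar-sum expansion yields terms $I(S;Y_i\mid M,Y_{i+1}^n,Z^{i-1})-I(S;Z_i\mid M,Y_{i+1}^n,Z^{i-1})$; these equal $I(V_i;Y_i|U_i)-I(V_i;Z_i|U_i)$ only when $V_i=(U_i,S)$, i.e., $U_i=(M,Y_{i+1}^n,Z^{i-1})$ and $V_i=(M,S,Y_{i+1}^n,Z^{i-1})$ as in the paper. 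Since one common pair $(U_i,V_i)$ must serve all three inequalities simultaneously, your choice cannot be patched per-bound; and once $S$ is inside $V_i$, the rate and leakage bounds must be re-derived with Fano absorbing $H(S|M,Y^n)$, which is how the paper's converse is structured.

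On the achievability side your plan is workable but the decisive step is a placeholder. You rightly observe that vanishing $\tfrac1n I(S;M,Z^n)$ is too weak, yet you then invoke ``standard privacy amplification,'' whose leftover-hash/total-variation form gives $\sum_{m,z^n}P(m,z^n)\max_s\Pr(S=s\mid m,z^n)\le 2^{-nR_s}+\epsilon_n$ with $\epsilon_n$ whose exponent is governed by the slack between the extractable randomness and the key length, not by $R_s$; for a key rate within $\delta$ of $I(V;Y|U)-I(V;Z|U)$ this collapses the mFAP exponent to $O(\delta)$. What is actually needed --- and what the paper proves --- is a per-pair bound on the joint pmf $\Pr(M=m,S=s)$, Eq. \eqref{eq:joint_prob}, coming from the uniform random assignment of typical $v^n$ sequences to bins and subbins; the ceiling's excess, summed over $m$, contributes only $2^{-n(I(V;Y)-3\delta_\epsilon)}$, which is exponentially dominated since $I(V;Y)\ge I(V;Y|U)$, so the full exponent survives. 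Your leakage evaluation also silently requires a bound of the type $H(Z^n|J)\le n(H(Z|U)+\delta_\epsilon)$ (Lemma \ref{lemma:1}), which needs its own typicality argument. In short: right architecture, but the unrestricted-max identity is incorrect, the stated auxiliaries cannot yield the mFAP converse, and the achievability exponent needs the joint-pmf bound rather than generic privacy amplification.
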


\begin{remark}[Randomized encoder]
Theorem \ref{theorem:region_exponent} holds also for a more general setting which allows randomized encoders, i.e., $M$ and $S$ are randomly generated according to $p(m|x^n)$ and $p(s|x^n)$, respectively. This can be seen from the converse proof of Theorem \ref{theorem:region_exponent} that no assumption regarding the deterministic encoders was made.
\end{remark}

\begin{remark}[Special cases]\label{remark:special_cases}\  \par
i) When side information at the adversary is degraded, i.e., $X-Y-Z$ forms a Markov chain, the compression-leakage-mFAP exponent region is reduced to the set $\mathcal{R}_{1,X-Y-Z}$ consisting of all tuples $(R,L,E)$ such that
\begin{align*}
R &\geq I(X;V|Y), \\
L & \geq I(X;Z)+I(X;V|Y),\\
E &\leq I(V;Y|Z),
\end{align*}
for some joint distributions of the form $P_{X,Y}P_{Z|Y}P_{V|X}$. We obtain this region from $\mathcal{R}_{1}$ by setting $U$ constant. 
The converse proof is modified slightly and is provided in Appendix \ref{appendix:converse_degraded_case}.

ii) When the adversary has no side information, the result in Theorem \ref{theorem:region_exponent} reduces to that in \cite{WillemsIgnatenko}. For example, by setting $Z$ and $U$ equal to constants and $R=H(X)$, we recover \cite[Theorem 4]{WillemsIgnatenko}.
\end{remark}

\begin{proof}[Proof of Theorem \ref{theorem:region_exponent}]
 The sketch of achievability proof is given below based on a random coding argument where we use the definitions and properties of $\epsilon$-typicality as in \cite{ElGamalKim}. Our achievable scheme utilizes  layered coding and binning,
 while the converse proof  for the information leakage rate is inspired by that of the secure source coding problem \cite{VillardPiantanida}.

 \textbf{Achievability}: Fix $P_{V|X}$ and $P_{U|V}$. Let $\epsilon$ and $\delta_{\epsilon}$ be positive real numbers where $\delta_{\epsilon} \rightarrow 0$ as $\epsilon \rightarrow 0$. Assume that $I(V;Y|U)-I(V;Z|U) >0$. The case where $I(V;Y|U)-I(V;Z|U) \leq 0$ is trivial since the encoder can just set the secret key message to be constant and does not transmit
 at all, implying that $(R,L,E)=(0,I(X;Z),0)$ is achievable.

 1) \emph{Codebook generation:} Randomly and independently generate $2^{n(I(X;U) + \delta_{\epsilon})}$ $u^n(j)$ sequences, each i.i.d. according to $\prod_{i=1}^nP_U(u_i)$, $j \in [1:2^{n(I(X;U) + \delta_{\epsilon})}]$. Then distribute them uniformly at random into $2^{n(I(X;U|Y) + 2\delta_{\epsilon})}$
 bins $b_U(m_1)$, $m_1 \in [1:2^{nI(X;U|Y) + 2\delta_{\epsilon}}]$. For each $j$, randomly and conditionally independently generate $2^{n(I(X;V|U) + \delta_{\epsilon})}$ $v^n(j,k)$ sequences, each i.i.d. according to $\prod_{i=1}^nP_{V|U}(v_i|u_i)$, $k \in [1:2^{n(I(X;V|U) + \delta_{\epsilon})}]$, and distribute these sequences uniformly at random into $2^{n(I(X;V|U,Y) + 3\delta_{\epsilon})}$
 bins $b_V(j,m_2)$, $m_2 \in [1:2^{nI(X;V|U,Y) + 3\delta_{\epsilon}}]$. Moreover, in each bin $b_V(j,m_2)$, we distribute sequences $v^n$ uniformly at random into subbins, indexed by $s$, where $s \in [1:2^{n(I(V;Y|U)-I(V;Z|U)-\delta_{\epsilon})}]$. The index $s$ here represents a subbin index of the second-layered bin. In each subbin, there are $2^{n(I(V;Z|U)-\delta_{\epsilon})}$ sequences $v^n$, each indexed by $s'$. Note that $k=(m_2,s,s')$ here. The codebooks are then revealed to all parties.

 2) \emph{Enrollment:} Given $x^n$, the encoder looks for $u^n(j)$ and $v^n(j,k)$ that are jointly typical with $x^n$. From the covering lemma \cite{ElGamalKim}, with high probability, there exist such codeword pairs. If there are more than one pairs, the encoder selects one of them uniformly at random,
 and then sends the corresponding bin indices $m_1$ and $m_2$ to the decoder. The total rate is thus equal to $I(X;U|Y)+I(X;V|U,Y) + 5\delta_{\epsilon} = I(X;V|Y) + 5\delta_{\epsilon}$. The secret key is set to be the subbin index $s$ in which the chosen sequence $v^n \in  b_V(j,m_2)$ falls.

 3) \emph{Authentication:} The decoder looks for $u^n(j)$ and $v^n(j,k)$ in the bins $(m_1,m_2)$ which are jointly typical with $y^n$. From the packing lemma \cite{ElGamalKim}, with high probability, it will find the unique sequence $u^n(j) \in b_U(m_1)$ which is jointly typical with $y^n$. Then, with high probability, it will find the unique $v^n(j,k) \in b_V(j,m_2)$ which is jointly typical with $y^n$ and the decoded $u^n(j)$. Finally, it puts out the corresponding subbin index of the decoded $v^n$ as an estimate of the secret key which, with high probability, will be equal to the generated one.

Let $U^n(J)$ and $V^n(J,K)$ be the codewords chosen at the encoder in the enrollment stage, and $(M_1,M_2)$ be the corresponding  indices of the bins to which $U^n(J)$ and $V^n(J,K)$ belong. Note that $(M_1,M_2)$ can be determined from $(J,K)$.

From the enrollment stage, the sources and selected codewords are jointly typical, i.e.,  $(X^n,U^n(J),V^n(J,K),Y^n,Z^n) \in \mathcal{T}_{\epsilon}^{(n)}$, with high probability. We have the following lemma.
\begin{lemma} \label{lemma:1}
The following bound holds, $H(Z^n|J) \leq n(H(Z|U)+\delta_{\epsilon})$.
\end{lemma}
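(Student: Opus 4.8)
The plan is to exploit two facts: for a fixed codebook the chosen index $J$ completely determines the codeword $U^n(J)$, and from the enrollment step the pair $(U^n(J),Z^n)$ is jointly typical with high probability. Conditional typicality then caps the number of $z^n$ compatible with a given typical $u^n$ at roughly $2^{nH(Z|U)}$, which is exactly the entropy bound we want.

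First I would trade the index for the codeword. Since $U^n(J)$ is a deterministic function of $J$ (given the revealed codebook), conditioning on $J$ is at least as informative as conditioning on $U^n(J)$, so
\begin{align}
H(Z^n|J) = H\big(Z^n\,\big|\,J,U^n(J)\big) \leq H\big(Z^n\,\big|\,U^n(J)\big).
\end{align}
It therefore suffices to bound $H(Z^n|U^n(J))$. Next I would introduce the joint-typicality indicator $E=\mathbf{1}\{(U^n(J),Z^n)\in\mathcal{T}_{\epsilon}^{(n)}\}$. Because the enrollment step gives $(X^n,U^n(J),V^n(J,K),Y^n,Z^n)\in\mathcal{T}_{\epsilon}^{(n)}$ with high probability, the marginal pair $(U^n(J),Z^n)$ is jointly typical with probability $1-\epsilon_n$, $\epsilon_n\to0$. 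Expanding,
\begin{align}
H\big(Z^n\,\big|\,U^n(J)\big)
&\leq H\big(Z^n,E\,\big|\,U^n(J)\big) \nonumber\\
&= H\big(E\,\big|\,U^n(J)\big) + H\big(Z^n\,\big|\,U^n(J),E\big) \nonumber\\
&\leq 1 + \Pr(E=1)\,H\big(Z^n\,\big|\,U^n(J),E=1\big) \nonumber\\
&\quad + \Pr(E=0)\,n\log|\mathcal{Z}|.
\end{align}
On the event $E=1$, every realization $u^n$ in the support is typical and the number of $z^n$ with $(u^n,z^n)\in\mathcal{T}_{\epsilon}^{(n)}$ is at most $2^{n(H(Z|U)+\delta_{\epsilon})}$, so $H(Z^n|U^n(J),E=1)\leq n(H(Z|U)+\delta_{\epsilon})$; the final term is bounded crudely using $H(Z^n)\leq n\log|\mathcal{Z}|$.

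Combining the two displays yields $H(Z^n|J)\leq 1 + n(H(Z|U)+\delta_{\epsilon}) + \epsilon_n\,n\log|\mathcal{Z}|$, and since the additive $1$ is $o(n)$ and $\epsilon_n\to0$ with $|\mathcal{Z}|$ fixed, both extra terms can be absorbed into a (slightly enlarged) $\delta_{\epsilon}$, giving the claim. The one point requiring care is the atypical contribution: I must verify that $\Pr(E=0)\,n\log|\mathcal{Z}|$ is genuinely $o(n)$, i.e.\ that $\Pr(E=0)\log|\mathcal{Z}|$ is negligible, which holds because $\Pr(E=0)\to0$ while the alphabet is finite. Everything else is routine once the reduction to $H(Z^n|U^n(J))$ and the conditional-typicality size bound are in place.
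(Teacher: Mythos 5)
Your proof is correct and follows essentially the same route as the paper's: both reduce $H(Z^n|J)$ to conditioning on $U^n(J)$ (using that the codeword is a function of $J$ given the codebook), split on a joint-typicality indicator, bound the typical branch via the conditional typical set size $|\mathcal{T}_{\epsilon}^{(n)}(Z|u^n)| \leq 2^{n(H(Z|U)+\delta_{\epsilon})}$, and absorb the atypical contribution $\Pr(\text{atypical})\cdot n\log|\mathcal{Z}|$ together with the indicator's entropy into an enlarged $\delta_{\epsilon}$. The only cosmetic differences are your flipped indicator convention ($E=1$ for typical rather than $E=0$) and bounding the indicator's entropy by $1$ instead of $h(\delta_{\epsilon})$; neither affects the argument.
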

\begin{proof}
The proof is given in Appendix \ref{appendix:Lemma1}.
\end{proof}

Then, the information leakage averaged over all possible codebooks can be bounded as follows.
\begin{align*}
&I(X^n;M_1,M_2,Z^n) = H(X^n)- H(X^n|M_1,M_2,Z^n)\\
&\leq nH(X)- H(X^n|J,Z^n)+H(M_2)\\
&\leq nH(X)- H(X^n,Z^n) + H(J) +H(Z^n|J)+H(M_2)\\
&\overset{(a)}{\leq} -nH(Z|X) + n(I(X;U)+\delta_{\epsilon}) + n(H(Z|U)+\delta_{\epsilon})\\ & \qquad +n(I(X;V|U,Y)+3\delta_{\epsilon})\\
&\overset{(b)}{\leq} n(I(X;U,Z)+ I(X;V|U,Y)+ \delta_{\epsilon}')\\
&\overset{(c)}{=} n(I(X;V,Y)-I(X;Y|U)+I(X;Z|U)+ \delta_{\epsilon}')\\
& \leq n(L + \delta_{\epsilon}'),
\end{align*}
if $L \geq I(X;V,Y)-I(X;Y|U)+I(X;Z|U)$,
 where $(a)$ follows from the memoryless property of the sources, from the codebook generation, and from bounding the term $H(Z^n|J)$ as in Lemma \ref{lemma:1}, $(b)$ from the Markov chain $U-X-Z$ for some $\delta_{\epsilon}' \geq 5\delta_{\epsilon}$, and $(c)$ from the Markov chain $U-V-X-(Y,Z)$.

As for an achievable mFAP exponent, we consider the adversary who
knows $m=(m_1,m_2)$ and side information $z^n$ and tries to select a sequence $\tilde{y}^n(m,z^n)$ that results in the estimated key $\tilde{S}_{\tilde{y}^n}$ equal to the original key $S$ of the person it claims to be.
From our achievable scheme, the secret key $S$ is chosen from the subbin index of the selected codeword $V^n$. Thus, the adversary only needs to consider $\tilde{S}_{\tilde{y}^n}$ that results from sequences $V^n$ which are jointly typical with $X^n$. There are in total $2^{n(I(X;U,V)+2\delta_{\epsilon})}$ such sequences generated.

 Similarly as in \cite{WillemsIgnatenko}, from the binning scheme with uniform bin and subbin index assignment, we have that the joint probability that a description $m$ is selected and a \emph{certain} secret key $s$ is chosen is equal to a total number of jointly typical sequences $v^n$ with corresponding indices $m$ and $s$ divided by a total number of jointly typical sequences $v^n$. That is,
\begin{align}
\mbox{Pr}(M=m,S=s)
&\leq \frac{\Big\lceil\frac{ \mbox{Pr}(M=m)\cdot 2^{n(I(X;U,V)+2\delta_{\epsilon})}}{|\mathcal{S}|}\Big\rceil}{2^{n(I(X;U,V)+2\delta_{\epsilon})}}. \label{eq:joint_prob}
\end{align}

Let $g(\cdot)$ denote the decoding function used for estimating the secret key message in the achievability scheme. Then
\begin{align*}
&\mbox{mFAP} = \max_{\tilde{y}^n(M,Z^n)\in \mathcal{Y}^n} \text{Pr}(\tilde{S}_{\tilde{y}^n} =S)\\
&= \max_{\tilde{y}^n(M,Z^n)\in \mathcal{Y}^n} \text{Pr}(g(M,\tilde{y}^n(M,Z^n)) =S)\\
&\leq \sum_{m=1,\ldots,|\mathcal{M}|}\sum_{z^n} \max_{\tilde{y}^n(m,z^n)\in \mathcal{Y}^n} \text{Pr}(M=m, Z^n=z^n,\\ & \qquad g(m,\tilde{y}^n(m,z^n))=S) \\
&= \sum_{m}\sum_{z^n} \max_{\tilde{y}^n(m,z^n)\in \mathcal{Y}^n} \text{Pr}(M=m, S= g(m,\tilde{y}^n(m,z^n)) )\cdot \\ &\qquad \text{Pr}(Z^n=z^n|M=m, S=g(m,\tilde{y}^n(m,z^n))) \\
&\overset{(a)}{\leq}  \sum_{m}\Big\lceil\frac{ \text{Pr}(M=m)\cdot2^{n(I(X;U,V)+2\delta_{\epsilon})}}{|\mathcal{S}|}\Big\rceil\cdot \frac{1}{2^{n(I(X;U,V)+2\delta_{\epsilon})}}\\
& \leq \sum_{m} \Big(\frac{ \text{Pr}(M=m)\cdot2^{n(I(X;U,V)+2\delta_{\epsilon})}}{|\mathcal{S}|}+1\Big)\cdot \\ & \qquad \frac{1}{2^{n(I(X;U,V)+2\delta_{\epsilon})}}\\
&\overset{(b)}{=} 2^{-n(I(V;Y|U)-I(V;Z|U)-\delta_{\epsilon})}  + 2^{-n(I(V;Y)-3\delta_{\epsilon})}\\
&\overset{(c)}{\leq} 2^{-n(I(V;Y|U)-I(V;Z|U)-\delta_{\epsilon}')},
\end{align*}
where $(a)$ follows from the uniform bin and subbin index assignment in the achievable scheme and the bound in $\eqref{eq:joint_prob}$, $(b)$ follows from the code construction where $|\mathcal{S}|=2^{n(I(V;Y|U)-I(V;Z|U)-\delta_{\epsilon})}$ and $|\mathcal{M}|=|\mathcal{M}_1||\mathcal{M}_2|=2^{n(I(X;V|Y) + 5\delta_{\epsilon})}$, and $(c)$ follows from the Markov chain $U-V-Y$ which results in $I(V;Y)\geq I(V;Y|U)$.

That is, we have
\begin{align*}
 &\frac{1}{n}\log\frac{1}{\mbox{mFAP}} \geq I(V;Y|U)-I(V;Z|U)-\delta_{\epsilon}' \geq E-\delta_{\epsilon}',
\end{align*}
if $E \leq I(V;Y|U)-I(V;Z|U)$.

\textbf{Converse}: Let $U_i \triangleq (M,Y_{i+1}^n,Z^{i-1})$ and $V_i \triangleq (M,S,Y_{i+1}^n,Z^{i-1})$ which satisfy $U_i-V_i-X_i-(Y_i,Z_i)$ for all $i=1,\ldots,n$ as $U_i$ is included in $V_i$ and $(Y_i,Z_i)$ is independent of $V_i$ given $X_i$ due to the memoryless property of the side information channel $P_{Y,Z|X}$. For any achievable tuple $(R,L,E) \in \mathbb{R}^3_{+}$, it follows that
\begin{align*}
&n(R + \delta_n) \geq H(M) \geq H(M|Y^n)-H(M,S|X^n,Y^n,Z^n)\\
&= H(M,S|Y^n)-H(S|M,Y^n) -H(M,S|X^n,Y^n,Z^n)\\
&\overset{(a)}{\geq}  I(M,S;X^n,Z^n|Y^n)-n\epsilon_n\\
&\overset{(b)}{\geq} \sum_{i=1}^n H(X_i,Z_i|Y_i)- H(X_i,Z_i|V_i,Y_i)-n\epsilon_n\\
&\geq \sum_{i=1}^n I(X_i;V_i|Y_i)-n\epsilon_n,
\end{align*}
where $(a)$ follows from Fano's inequality $H(S|M,Y^n) \leq n\epsilon_n$ and $(b)$ follows from the definition of $V_i$ and that conditioning reduces entropy.

The information leakage can be bounded as follows.
\begin{align*}
&n(L + \delta_n)\geq I(X^n;M,Z^n)
= I(X^n;M,S,Y^n)\\ &\qquad -I(X^n;S|M,Y^n) - I(X^n;Y^n|M) +I(X^n;Z^n|M)  \\
&\overset{(a)}{\geq} I(X^n;M,S,Y^n)-n\epsilon_n  - I(X^n;Y^n|M)+I(X^n;Z^n|M)  \\
&= \sum_{i=1}^n H(X_i)-H(X_i|M,S,X^{i-1},Y^n) - H(Y_i|M,Y_{i+1}^n) \\ &\qquad +H(Y_i|M,Y_{i+1}^n,X^n)+H(Z_i|M,Z^{i-1}) \\ &\qquad - H(Z_i|M,Z^{i-1},X^n)-n\epsilon_n  \\
&\overset{(b)}{\geq} \sum_{i=1}^n H(X_i)-H(X_i|M,S,X^{i-1},Y^n,Z^{i-1})-I(Y_i;X_i)\\ &\qquad + I(Y_i;M,Y_{i+1}^n) +I(Z_i;X_i)-I(Z_i;M,Z^{i-1}) -n\epsilon_n  \\
&\overset{(c)}{\geq} \sum_{i=1}^n I(X_i;M,S,Y_i^n,Z^{i-1}) -I(Y_i;X_i)+I(Z_i;X_i) \\ &\qquad + I(Y_i;M,Z^{i-1},Y_{i+1}^n) -I(Z_i;M,Z^{i-1},Y_{i+1}^n)-n\epsilon_n \\
&\overset{(d)}{=} \sum_{i=1}^n I(X_i;V_i,Y_i) -I(Y_i;X_i|U_i) +I(Z_i;X_i|U_i)-n\epsilon_n,
\end{align*}
where $(a)$ follows from Fano's inequality, $(b)$ follows from the Markov chains $X_i-(M,S,X^{i-1},Y^n)-Z^{i-1}$ and $(Y_i,Z_i)-X_i-(M,Y_{i+1}^n,Z^{i-1},X^{n\setminus i})$, $(c)$ follows from the Csisz\'{a}r's sum identity \cite{CsiszarBook}, $\sum_{i=1}^n I(Y_i;Z^{i-1}|M,Y_{i+1}^n) - I(Z_i;Y_{i+1}^n|M,Z^{i-1}) =0$, $(d)$ follows from the definitions of $U_i$ and $V_i$ and the Markov chain $U_i-X_i-(Y_i,Z_i)$.

Lastly, the bound on mFAP exponent $n(E-\delta_n) \leq \sum_{i=1}^nI(V_i;Y_i|U_i)-I(V_i;Z_i|U_i)$ can be shown similarly as in \cite{WillemsIgnatenko} with some modification. This part of the proof is provided in Appendix \ref{appendix:converse_mFAP}.
The proof ends with the standard steps for single letterization using a time-sharing random variable and letting $\delta_n, \epsilon_n \rightarrow 0$ as $n\rightarrow \infty$.
The cardinality bounds on the sets $\mathcal{U}$ and $\mathcal{V}$ 
can be proved using the support lemma \cite{CsiszarBook}, and is shown in Appendix \ref{appendix:cardinality}.
\end{proof}

\subsection{Binary Example}
To demonstrate the derived tradeoff, let us consider a simple binary example of the special case in Remark \ref{remark:special_cases}i). Let $X\sim \text{Bern}(1/2)$, $Y$ is an erased version of $X$ with erasure probability $p$, and $Z$ is an erased version of $Y$ with erasure probability $q$. The region $\mathcal{R}_{1,X-Y-Z}$ in Remark \ref{remark:special_cases}i) reduces to the set of all $(R,L,E)$ such that
\begin{align*}
R &\geq p(1-h(\alpha)), \\
L & \geq (1-q)(1-p) + p(1-h(\alpha)),\\
E &\leq q(1-p)(1-h(\alpha)),
\end{align*}
for some $\alpha \in [0,1/2]$. The proof is given in Appendix \ref{appendix:proof_example}. We can see for example that there is a tradeoff between the mFAP exponent and the leakage rate, i.e., in order to increase the mFAP exponent, we need to allow some more leakage.

\section{Secret Key Generation with Privacy Constraint}
In this section, we consider a related problem setting depicted in Fig. \ref{fig:Rs} where, instead of maximizing the mFAP exponent, we are interested in maximizing the secret key rate generated at the enrollment stage as well as protecting the secret key from any inference of an adversary who has access to the description $M$ and side information $Z^n$. This setting without the compression rate constraint was studied in \cite{LaiHoPoor} where the authors characterized inner and outer bounds to the leakage-key rate region. Moreover, it is closely related to the one-way secret key generation with rate constraint in \cite{CsiszarNarayan}.

\subsection{Problem Formulation}
The problem setting follows similarly as that in Section~\ref{sec:problem_setting_mFAP}, except that the mFAP constraint in \eqref{eq:mFAPconstraint} is replaced by the key rate and key leakage constraints. 
\begin{figure}[t]
    \centering
    \psfrag{x}[][][0.9]{\small{$X^{n}$}}
    \psfrag{y}[][][0.9]{\small{$Y^{n}$}}
    \psfrag{z}[][][0.9]{\small{$Z^{n}$}}
    \psfrag{s}[][][0.9]{\small{$S,\ \color{blue}\text{rate}\ R_s$}}
    \psfrag{m}[][][0.9]{\small{$M,\ \color{blue}\text{rate}\ R$}}
    \psfrag{shat}[][][0.9]{\small{$\hat{S},\ \color{blue}\text{Pr}(\hat{S} \neq S) \leq \delta$}}
    \psfrag{enc}[][][0.9]{\small{Encoder}}
    \psfrag{dec}[][][0.9]{\small{Decoder}}
    \psfrag{adv}[][][0.9]{\small{Adversary}}
    \psfrag{L}[][][0.9]{\small{\color{blue}$\frac{1}{n}I(X^n;M,Z^n) \leq L+\delta$}}
   \psfrag{Lkey}[][][0.9]{\small{\color{blue}$\frac{1}{n}I(S;M,Z^n) \leq \delta$}}
    \includegraphics[width=7.5cm]{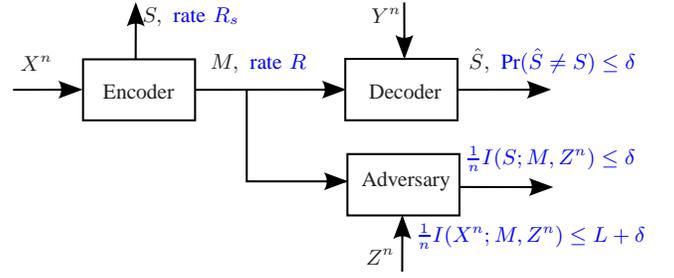}
    \caption{Secret key generation for authentication with a privacy constraint.}
    \label{fig:Rs}
\end{figure}

\begin{definition}  A tuple $(R,L,R_s) \in \mathbb{R}^{3}_{+}$ is said to be \emph{achievable} if for any $\delta>0$ and all sufficiently large $n$ there exists a code consisting of encoders and a decoder (as in Definition~\ref{def:code_mFAP}) such that \eqref{eq:FRPconstraint}-\eqref{eq:leakage_constraint} hold and
\begin{align}
 \frac{1}{n}H(S) \geq R_s & -\delta,\\
 \frac{1}{n} I(S;M,Z^n) &\leq \delta.
\end{align}
The \emph{compression-leakage-key rate region} $\mathcal{R}_2$ is the set of all achievable tuples.
\end{definition}

\subsection{Result}
\begin{theorem}\label{theorem:region_key rate}
The compression-leakage-key rate region $\mathcal{R}_2$ for the problem in Fig. \ref{fig:Rs} is given by a set of all tuples $(R,L,R_s)\in \mathbb{R}^{3}_{+}$ such that
\begin{align}
R &\geq I(X;V|Y), \\
L & \geq I(X;V,Y) - I(X;Y|U)+I(X;Z|U),\\
R_s &\leq I(V;Y|U)-I(V;Z|U),
\end{align}
for some joint distributions of the form $P_{X,Y,Z}P_{V|X}P_{U|V}$
with  $|\mathcal{U}| \leq |\mathcal{X}| +3, |\mathcal{V}| \leq (|\mathcal{X}|+3)(|\mathcal{X}|+2)$.
\end{theorem}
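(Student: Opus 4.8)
The plan is to exploit the fact that the target region coincides in form with that of Theorem \ref{theorem:region_exponent}, the only difference being that the key rate $R_s$ replaces the mFAP exponent $E$ while obeying the same single-letter bound $I(V;Y|U)-I(V;Z|U)$. Accordingly, for achievability I would reuse \emph{verbatim} the layered-coding-and-binning codebook of Theorem \ref{theorem:region_exponent}, together with its enrollment and authentication procedures, and set the secret key $S$ equal to the subbin index $s$, so that $|\mathcal{S}|=2^{n(I(V;Y|U)-I(V;Z|U)-\delta_\epsilon)}$. Under this identification the reliability constraint \eqref{eq:FRPconstraint}, the rate bound $R\geq I(X;V|Y)$, and the leakage bound $L\geq I(X;V,Y)-I(X;Y|U)+I(X;Z|U)$ follow from exactly the same arguments already given (in particular Lemma \ref{lemma:1} and the leakage chain leading to \eqref{eq:leakage_constraint}), so nothing new is needed for the first two coordinates. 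What remains is to verify the two key-specific requirements: the key rate $\frac{1}{n}H(S)\geq R_s-\delta$ and the key-secrecy constraint $\frac{1}{n}I(S;M,Z^n)\leq\delta$.

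For the key rate I would show that, averaged over the random codebook and by the uniform-at-random subbin assignment, the chosen subbin index is nearly uniform over $\mathcal{S}$, so that $\frac{1}{n}H(S)\to\frac{1}{n}\log|\mathcal{S}|=I(V;Y|U)-I(V;Z|U)$, which yields the claimed $R_s$. The crux of the achievability, and the step I expect to be the main obstacle, is the secrecy bound $\frac{1}{n}I(S;M,Z^n)\to0$. Here I would argue that, conditioned on the adversary's view $(M,Z^n)$, the posterior over the subbin index $s$ is essentially uniform. The mechanism is the deliberate choice of subbin size $2^{n(I(V;Z|U)-\delta_\epsilon)}$: given $(M,Z^n)$ the adversary can resolve the enrolled $v^n$ only down to the $\approx 2^{nI(V;Z|U)}$ jointly $(U,V,Z)$-typical candidates inside the relevant second-layer bin, and these candidates are spread in a balanced way across subbins, so the index $s$ naming the subbin stays hidden. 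Concretely I would lower-bound $H(S|M,Z^n)$ by the same balanced-binning counting argument over typical sequences that produced the joint-probability estimate \eqref{eq:joint_prob} in the mFAP analysis (in the spirit of the secret-key-generation arguments of \cite{LaiHoPoor},\cite{CsiszarNarayan}), and combine it with the near-uniformity of $H(S)$ to conclude $I(S;M,Z^n)=H(S)-H(S|M,Z^n)\leq n\delta$. This is precisely the announced equivalence between the optimal key rate and the optimal mFAP exponent.

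For the converse I would again borrow the $R$ and $L$ bounds of Theorem \ref{theorem:region_exponent} unchanged, using the identical auxiliary definitions $U_i\triangleq(M,Y_{i+1}^n,Z^{i-1})$ and $V_i\triangleq(M,S,Y_{i+1}^n,Z^{i-1})$; only the third bound needs a new but parallel derivation. Starting from $n(R_s-\delta_n)\leq H(S)$, I would invoke the key-leakage constraint $\frac{1}{n}I(S;M,Z^n)\leq\delta$ to write $H(S)\leq H(S|M,Z^n)+n\delta$, then use Fano's inequality through $\hat{S}=g(M,Y^n)$ to replace $H(S|M,Z^n)$ by $I(S;Y^n|M,Z^n)+n\epsilon_n$, and finally single-letterize $I(S;Y^n|M,Z^n)$ exactly as in the mFAP converse of Appendix \ref{appendix:converse_mFAP} to obtain $n(R_s-\delta_n)\leq\sum_{i=1}^n I(V_i;Y_i|U_i)-I(V_i;Z_i|U_i)$. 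The proof then closes with the standard time-sharing single-letterization and the same support-lemma cardinality bounds of Appendix \ref{appendix:cardinality}, letting $\delta_n,\epsilon_n\to0$ as $n\to\infty$. The main difficulty throughout is therefore concentrated in the achievability secrecy step; the rest is a faithful transcription of the Theorem \ref{theorem:region_exponent} argument.
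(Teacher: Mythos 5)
Your proposal is correct and follows essentially the same route as the paper: the same layered-binning codebook with the key set to the subbin index, key rate and secrecy established by the counting/equivocation mechanism you describe (the paper formalizes your ``balanced subbins of size $2^{n(I(V;Z|U)-\delta_\epsilon)}$'' intuition via Fano's inequality applied to the residual within-subbin index $S'$ together with a joint-typicality lower bound on $H(U^n,V^n,Z^n)$), and the identical converse starting from $H(S)\leq H(S|M,Z^n)+n\delta$ and reusing the Csisz\'{a}r-sum steps of Appendix \ref{appendix:converse_mFAP}. Your phrasing of the converse via $I(S;Y^n|M,Z^n)$ differs from the paper's $H(S|M,Z^n)-H(S|M,Y^n)$ only by the term $I(S;Z^n|M,Y^n)\leq H(S|M,Y^n)\leq n\epsilon_n$, which Fano's inequality absorbs, so the two are equivalent.
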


\begin{remark}
Although different achievable schemes were used, the inner bound in \cite{LaiHoPoor} coincides with the compression-leakage-key rate region $\mathcal{R}_2$ where $R=H(X)$. Here we provide the complete result by establishing a matching converse. In addition, the extra compression rate constraint is considered where the layered binning scheme is shown to be optimal.
\end{remark}
\begin{remark}
The regions specified in Theorems \ref{theorem:region_exponent} and \ref{theorem:region_key rate} have the same form. In particular, the maximum secret key rate in Theorem \ref{theorem:region_key rate} is equal to the maximum mFAP exponent presented in Theorem \ref{theorem:region_exponent}.
Intuitively, this follows from the fact that the coding scheme used to prove Theorem \ref{theorem:region_exponent} also achieves negligible key leakage rate, implying that the adversary has no useful knowledge about the key. It can then only guess the key from possible values in the set $\mathcal{S}$ whose cardinality is at least $2^{H(S)}$.
A similar observation for the case without adversary's side information was noted in \cite{WillemsIgnatenko}.
\end{remark}
\begin{proof}[Proof of Theorem \ref{theorem:region_key rate}]
Proofs for the compression rate $R$ and leakage rate $L$ remain the same as those of Theorem \ref{theorem:region_exponent}. Here we only provide the proof of the secret key rate.

\textbf{Achievability}: With the same achievable scheme as in the proof of Theorem \ref{theorem:region_exponent}, it follows that
 \begin{align*}
H(S) &\geq H(S|J,M_2,S')= H(S,J,M_2,S')-H(J,M_2,S') \\
&\overset{(a)}{\geq} H(U^n,V^n)-H(J)-H(M_2)-H(S')\\
&\overset{(b)}{\geq} n(I(X;U,V)-2\delta_{\epsilon}) - n(I(X;U)+\delta_{\epsilon})\\ & \qquad -n(I(X;V|U,Y)+3\delta_{\epsilon})-(I(V;Z|U)-\delta_{\epsilon})\\
& \geq n(I(Y;V|U)-I(Z;V|U) -\delta_{\epsilon}') \geq n(R_s-\delta_{\epsilon}'),
\end{align*}
if $R_s \leq I(Y;V|U)-I(Z;V|U)$, where $(a)$ follows since $(U^n,V^n)$ are functions of $(J,K)=(J,M_2,S,S')$ given the codebook, and $(b)$ follows from the codebook generation and the properties of jointly typical sequences, i.e., $p(u^n,v^n) \leq \sum_{x^n \in \mathcal{T}_{\epsilon}^{(n)}(X|u^n,v^n)}p(x^n)\leq 2^{-n(I(X;U,V)-2\delta_{\epsilon})}$.

The key leakage averaged over all possible codebooks can be bounded as follows.
 \begin{align*}
&I(S;M_1,M_2,Z^n) 
\leq H(S)-H(S|J,M_2,Z^n)\\
&= H(S)-H(S,J,M_2,Z^n)+H(J,M_2,Z^n)\\
&\leq H(S) -H(S,J,M_2,Z^n,S')+H(S'|S,J,M_2,Z^n)\\&\qquad +H(J)+H(M_2)+H(Z^n|J)\\
&\overset{(a)}{\leq} H(S)-H(U^n,V^n,Z^n) + n\epsilon_n +H(J)\\&\qquad+H(M_2)+H(Z^n|J)
\end{align*}
 \begin{align*}
&\overset{(b)}{\leq} H(S) -n(I(X;U,V)+H(Z|U,V)-3\delta_{\epsilon})\\&\qquad + n\epsilon_n + n(I(X;U)+\delta_{\epsilon})+n(I(X;V|U,Y)+3\delta_{\epsilon}) \\ &\qquad +n(H(Z|U)+\delta_{\epsilon}) \\
&\overset{(c)}{\leq} n\delta_{\epsilon}'',
\end{align*}
where $(a)$ follows since $(U^n,V^n)$ are functions of $(J,K)=(J,M_2,S,S')$ given the codebook, and from the Fano's inequality $H(S'|S,J,M_2,Z^n) \leq n\epsilon_n$ (this is due to the codebook generation in which the size of $\mathcal{S}'$ for a given $(J,M_2,S)$ is less than $2^{nI(V;Z|U)}$ and therefore with high probability $S'$ can be decoded given $(S,J,M_2,Z^n)$), $(b)$ follows from bounding the term $H(U^n,V^n,Z^n)$ using properties of jointly typical sequences, i.e., $p(u^n,v^n,z^n) \leq 2^{-n(H(Z)+I(X;U,V|Z)-3\delta_{\epsilon})}=2^{-n(I(X;U,V)+H(Z|U,V)-3\delta_{\epsilon})}$, from the code construction, and from Lemma~\ref{lemma:1}, and $(c)$ from the code construction that $S \in [1:2^{n(I(Y;V|U)-I(Z;V|U)-\delta_{\epsilon})}]$.

\textbf{Converse}: $U_i$ and $V_i$ are defined as in the converse proof of Theorem \ref{theorem:region_exponent}. For any achievable $R_s$, it follows that
\begin{align*}
n(R_s-\delta_n) & \leq H(S)=H(S|M,Z^n)+I(S;M,Z^n)\\
&\overset{(a)}{\leq} H(S|M,Z^n)+n\delta_n\\
&\overset{(b)}{\leq} \sum_{i=1}^n I(V_i;Y_i|U_i)-I(V_i;Z_i|U_i)+n\delta_n+n\epsilon_n,
\end{align*}
where $(a)$ follows from the key leakage constraint and $(b)$ follows from the steps from $\eqref{eq:Econverse_start}$ to $\eqref{eq:Econverse_end}$.
\end{proof}


\appendices

\section{Converse Proof of Region $\mathcal{R}_{1,X-Y-Z}$} \label{appendix:converse_degraded_case}
Let $V_i \triangleq (M,S,Y_{i+1}^n,Z^{n\setminus i})$ which satisfies $V_i-X_i-Y_i-Z_i$ for all $i=1,\ldots,n$. For any achievable tuple $(R,L,E)$, it follows that
\begin{align*}
n(R + \delta_n) &\geq H(M) \\
&\geq H(M|Y^n,Z^n)-H(M,S|X^n,Y^n,Z^n)\\
&= H(M,S|Y^n,Z^n)-H(S|M,Y^n,Z^n)\\
&\qquad -H(M,S|X^n,Y^n,Z^n)\\
&\overset{(a)}{\geq}  I(M,S;X^n|Y^n,Z^n)-n\epsilon_n\\
&\overset{(b)}{\geq}  \sum_{i=1}^n I(X_i;V_i|Y_i)-n\epsilon_n,
\end{align*}
where $(a)$ follows from Fano's inequality $H(S|M,Y^n) \leq n\epsilon_n$ and $(b)$ follows from the Markov chain $X_i-Y_i-Z_i$, the definition of $V_i$, and that conditioning reduces entropy.

The information leakage,
\begin{align*}
&n(L + \delta_n)\\&\geq I(X^n;M,Z^n) = I(X^n;Z^n)+I(X^n;M|Z^n) \\
&\overset{(a)}{\geq} I(X^n;Z^n)+H(M|Z^n,Y^n)-H(M|X^n,Y^n,Z^n) \\
&\overset{(b)}{\geq} I(X^n;Z^n)+H(M,S|Z^n,Y^n)-n\epsilon_n \\ &\qquad -H(M,S|X^n,Y^n,Z^n) \\
&\overset{(c)}{\geq} \sum_{i=1}^n I(X_i;Z_i) + H(X_i|Y_i)-H(X_i|V_i,Y_i)-n\epsilon_n,
\end{align*}
where $(a)$ follows from the Markov chain $M-(X^n,Z^n)-Y^n$, $(b)$ follows from Fano's inequality, $(c)$ follows from the Markov chains $X_i-Y_i-Z_i$ and the definition of $V_i$.

The bound on mFAP exponent follows similarly as in the converse proof of Theorem \ref{theorem:region_exponent}, except that the steps from  \eqref{eq:Econverse_start} to \eqref{eq:Econverse_end} are replaced by
\begin{align*}
 H(S|M,Z^n)
&\overset{(a)}{\leq} H(S|M,Z^n)-H(S|M,Y^n)+n\epsilon_n \\
&\overset{(b)}{=} H(S|M,Z^n)-H(S|M,Y^n,Z^n)+n\epsilon_n \\
&\overset{(c)}{\leq} \sum_{i=1}^n H(Y_i|Z_i) -H(Y_i|V_i,Z_i)+n\epsilon_n,
\end{align*}
where $(a)$ from Fano's inequality, $(b)$ from the Markov chain $(S,M)-Y^n-Z^n$, and $(c)$ from the definition of $V_i$.

\section{Proof of Lemma \ref{lemma:1}} \label{appendix:Lemma1}
Let $E$ be a binary random variable taking value $0$ if $(X^n,U^n(J),V^n(J,K),Y^n,Z^n) \in \mathcal{T}_{\epsilon}^{(n)}$, and $1$ otherwise. Since $(X^n,U^n(J),V^n(J,K),Y^n,Z^n) \in \mathcal{T}_{\epsilon}^{(n)}$ with high probability, we have $\text{Pr}(E=1) \leq \delta_{\epsilon}$. It follows that
\begin{align*}
 H(Z^n|J)&\leq H(Z^n|U^n,E) + H(E)\\
  &\leq \text{Pr}(E=0) H(Z^n|U^n,E=0) \\ &\qquad+ \text{Pr}(E=1) H(Z^n|U^n,E=1) + h(\delta_{\epsilon})
 \end{align*}
 \begin{align*}
 &\leq H(Z^n|U^n,E=0) +\delta_{\epsilon} H(Z^n)+ h(\delta_{\epsilon}) \\
 &\leq H(Z^n|U^n,E=0) + n\delta_{\epsilon} \log|\mathcal{Z}| + h(\delta_{\epsilon})\\
&= \sum_{u^n \in \mathcal{T}_{\epsilon}^{(n)}} p(u^n|E=0) H(Z^n|U^n=u^n,E=0) \\ &\qquad + n\delta_{\epsilon} \log|\mathcal{Z}| + h(\delta_{\epsilon})\\
&\leq \sum_{u^n \in \mathcal{T}_{\epsilon}^{(n)}} p(u^n|E=0) \log|\mathcal{T}_{\epsilon}^{(n)}(Z|u^n)| + n\delta_{\epsilon} \log|\mathcal{Z}| \\&\qquad + h(\delta_{\epsilon})\leq n(H(Z|U)+\delta_{\epsilon}'),
\end{align*}
where $h(\cdot)$ is the binary entropy function, and the last inequality follows from the property of jointly typical set \cite{ElGamalKim} with $\delta_{\epsilon}, \delta_{\epsilon}' \rightarrow 0$ as $\epsilon \rightarrow 0$, and $\epsilon \rightarrow 0$ as $n \rightarrow \infty$.
%
%
\section{Converse Proof of the mFAP Exponent Bound} \label{appendix:converse_mFAP}
Similarly as in \cite{WillemsIgnatenko}, let us define the set of secret key messages that can be reconstructed from $m$, i.e., $\mathcal{C}(m) = \{s: \text{there exists a sequence}\ y^n \in \mathcal{Y}^n\ \text{s.t.}\ g^{(n)}(m,y^n)=s\}$. Also, let $C(s,m) = 1$ for $s \in \mathcal{C}(m)$, and $0$ otherwise. We have that $\delta_{n}\geq \text{Pr}(\hat{S}\neq S) \geq \sum_{m}\text{Pr}(M=m,S \notin \mathcal{C}(m)) = \text{Pr}(C=0)$. An adversary who knows $m$ and $z^n$ can choose
a sequence $\tilde{y}^n$ that results in the MAP estimate, i.e.,
\begin{align}
\tilde{s}(m,z^n) = \arg\max_{s \in \mathcal{C}(m)}p(s|m,z^n), \label{eq:MAP_strategy_converse}
\end{align}
and achieves
\begin{align}
&\text{FAP} = \sum_{m,z^n}\text{Pr}(\tilde{s}=S,M=m,Z^n=z^n)\nonumber \\
&\overset{(a)}{=}\sum_{m,z^n}p(m,z^n)\max_{s \in \mathcal{C}(m)}p(s|m,z^n)\nonumber\\
&\geq \sum_{m,z^n}p(m,z^n)\max_{s \in \mathcal{C}(m)}p(s,C=1|m,z^n)\nonumber\\
&\geq \sum_{m,z^n}p(m,z^n)p(C=1|m,z^n)\max_{s \in \mathcal{C}(m)}p(s|m,z^n,C=1), \label{eq:FAP_converse}
\end{align}
where $(a)$ follows from $\eqref{eq:MAP_strategy_converse}$.
Then for any achievable $E$, it follows that
\begin{align*}
&n(E-\delta_n) \leq \log\Big(\frac{1}{\text{mFAP}}\Big) \leq \log\Big(\frac{1}{\text{FAP}}\Big) \\
&\overset{(a)}{\leq} -\log \big(\text{Pr}(C=1)\big) \\ &\qquad -\log\big(\sum_{m,z^n}p(m,z^n|C=1)\max_{s\in \mathcal{C}(m)}p(s|m,z^n,C=1)\big) \\
&\overset{(b)}{\leq} -\log(1-\delta_n) \\ &\qquad -\sum_{m,z^n}p(m,z^n|C=1)\log\big(\max_{s\in \mathcal{C}(m)}p(s|m,z^n,C=1)\big) \\
&\leq -\log(1-\delta_n) -\sum_{m,z^n}p(m,z^n|C=1)\cdot  \\ &\qquad \sum_{s\in \mathcal{C}(m)}p(s|m,z^n,C=1)\log(p(s|m,z^n,C=1))\\
&= -\log(1-\delta_n)+ H(S|M,Z^n,C=1),
\end{align*}
where $(a)$ follows from $\eqref{eq:FAP_converse}$ and $(b)$ follows from $\text{Pr}(C=1)\geq 1-\delta_n$ and Jensen's inequality
\cite{Williams}.

Continuing the chain of inequalities where $(1-\delta_n)H(S|M,Z^n,C=1)\leq \text{Pr}(C=1) H(S|M,Z^n,C=1) \leq H(S|M,Z^n)$, we get
\begin{align}
& (1-\delta_n)\cdot[n(E-\delta_n)+\log(1-\delta_n)] \nonumber \\
&\leq H(S|M,Z^n)\label{eq:Econverse_start}\\
&\overset{(a)}{\leq} H(S|M,Z^n)-H(S|M,Y^n)+n\epsilon_n \nonumber\\
&= \sum_{i=1}^n I(S;Y_i|M,Y_{i+1}^n)-I(S;Z_i|M,Z^{i-1})+n\epsilon_n \nonumber\\
&\overset{(b)}{=} \sum_{i=1}^n I(S,Z^{i-1};Y_i|M,Y_{i+1}^n)-I(S,Y_{i+1}^n;Z_i|M,Z^{i-1})\nonumber \\ &\qquad +n\epsilon_n \nonumber\\
&\overset{(c)}{=} \sum_{i=1}^n I(S;Y_i|M,Y_{i+1}^n,Z^{i-1})-I(S;Z_i|M,Y_{i+1}^n,Z^{i-1})\nonumber \\ &\qquad +n\epsilon_n \nonumber\\
&\overset{(d)}{=} \sum_{i=1}^n I(V_i;Y_i|U_i)-I(V_i;Z_i|U_i)+n\epsilon_n, \label{eq:Econverse_end}
\end{align}
where $(a)$ follows from Fano's inequality, and $(b)$ and $(c)$ from the Csisz\'{a}r's sum identity $\sum_{i=1}^n I(Z^{i-1};Y_i|M,S,Y_{i+1}^n)-I(Y_{i+1}^n;Z_i|M,S,Z^{i-1})=0=\sum_{i=1}^n I(Z^{i-1};Y_i|M,Y_{i+1}^n)-I(Y_{i+1}^n;Z_i|M,Z^{i-1})$, and $(d)$ from the definitions $U_i \triangleq (M,Y_{i+1}^n,Z^{i-1})$ and $V_i \triangleq (M,S,Y_{i+1}^n,Z^{i-1})$.

\section{Cardinality Bounds of The Sets $\mathcal{U}$ and $\mathcal{V}$ in Theorem \ref{theorem:region_exponent}} \label{appendix:cardinality}
Consider the expression of $\mathcal{R}_{1}$ in Theorem \ref{theorem:region_exponent}:
\begin{align*}
R &\geq I(X;V|Y), \\
L & \geq I(X;V,Y) - I(X;Y|U)+I(X;Z|U),\\
E &\leq I(V;Y|U)-I(V;Z|U),
\end{align*}
for some $U \in \mathcal{U}$, $V \in \mathcal{V}$ such that $U-V-X-(Y,Z)$ forms a Markov chain.

We can rewrite some mutual information terms in the expression above as
\begin{align*}
     R &\geq H(X|Y)-H(X,Y|V)+H(Y|V),\\
     L &\geq H(X)-H(X,Y|V)+H(Y|V) -H(Y|U)+ H(Y|X)\\&\qquad+ H(Z|U)-H(Z|X),\\
     E &\leq H(Y|U)-H(Y|V)-H(Z|U)+H(Z|V).
\end{align*}

We will show that the random variables $U$ and $V$ may be replaced by new ones, satisfying $|\mathcal{U}| \leq  |\mathcal{X}|+3$, $|\mathcal{V}| \leq  (|\mathcal{X}|+3)(|\mathcal{X}|+2)$, and preserving the terms $H(X,Y|V),H(Y|V),H(Z|V)$, and $H(Y|U)-H(Z|U)$.

First, we bound the cardinality of the set $\mathcal{U}$.
Let us define the following $|\mathcal{X}|+3$ continuous functions of $p(v|u)$, $v \in \mathcal{V}$,
\begin{align*}
&f_{j}(p(v|u)) = \sum_{v \in \mathcal{V}}p(v|u)p(x|u,v),\ j=1,\ldots,|\mathcal{X}|-1, \\
&f_{|\mathcal{X}|}(p(v|u))  = H(X,Y|V,U=u)\\ & \qquad \qquad \qquad = H(X,Y,V|U=u)-H(V|U=u),\\
&f_{|\mathcal{X}|+1}(p(v|u)) =H(Y|V,U=u)\\ & \qquad \qquad \qquad = H(Y,V|U=u)-H(V|U=u),\\
&f_{|\mathcal{X}|+2}(p(v|u)) =H(Z|V,U=u)\\ & \qquad \qquad \qquad = H(Z,V|U=u)-H(V|U=u),\\
&f_{|\mathcal{X}|+3}(p(v|u)) =H(Y|U=u)-H(Z|U=u).
\end{align*}
The corresponding averages are
\begin{align*}
& \sum_{u \in \mathcal{U}}p(u)f_{j}(p(v|u))=P_{X}(x),\ j=1,\ldots,|\mathcal{X}|-1, \\
& \sum_{u \in \mathcal{U}}p(u) f_{|\mathcal{X}|}(p(v|u))= H(X,Y,V|U)-H(V|U), \\
& \sum_{u \in \mathcal{U}}p(u) f_{|\mathcal{X}|+1}(p(v|u))=  H(Y,V|U)-H(V|U), \\
& \sum_{u \in \mathcal{U}}p(u) f_{|\mathcal{X}|+2}(p(v|u))=  H(Z,V|U)-H(V|U), \\
& \sum_{u \in \mathcal{U}}p(u) f_{|\mathcal{X}|+3}(p(v|u))= H(Y|U)-H(Z|U).
\end{align*}
According to the support lemma \cite{CsiszarBook}, we can deduce that there exists a new random variable $U'$ jointly distributed with $(X,Y,Z,V)$ whose alphabet size is $|\mathcal{U}'|= |\mathcal{X}|+3$, and numbers $\alpha_{i} \geq 0$ with $\sum_{i=1}^{|\mathcal{X}|+3}\alpha_{i} =1$ that satisfy
\begin{align*}
&\sum_{i=1}^{|\mathcal{X}|+3}\alpha_{i} f_{j}(P_{V|U'}(v|i)) = P_{X}(x),\ j=1,\ldots,|\mathcal{X}|-1, \\
&\sum_{i=1}^{|\mathcal{X}|+3}\alpha_{i}f_{|\mathcal{X}|}(P_{V|U'}(v|i)) = H(X,Y,V|U')-H(V|U'),\\
&\sum_{i=1}^{|\mathcal{X}|+3}\alpha_{i}f_{|\mathcal{X}|+1}(P_{V|U'}(v|i)) = H(Y,V|U')-H(V|U'),\\
&\sum_{i=1}^{|\mathcal{X}|+3}\alpha_{i}f_{|\mathcal{X}|+2}(P_{V|U'}(v|i)) = H(Z,V|U')-H(V|U'),\\
&\sum_{i=1}^{|\mathcal{X}|+3}\alpha_{i}f_{|\mathcal{X}|+3}(P_{V|U'}(v|i)) = H(Y|U')-H(Z|U').
\end{align*}
Note that we have
\begin{align*}
&H(X,Y,V|U')-H(V|U') \\
&= H(X,Y,V|U)-H(V|U)\\
& \overset{(a)}{=} H(X,Y|V),
\end{align*}
where $(a)$ follows from the Markov chain $U-V-X-(Y,Z)$.
Similarly, from the Markov chain $U-V-X-(Y,Z)$, we have that $H(Y,V|U')-H(V|U')=H(Y,V|U)-H(V|U)=H(Y|V)$, and $H(Z,V|U')-H(V|U')=H(Z,V|U)-H(V|U)=H(Z|V)$.
Since $P_{X}(x)$ is preserved, $P_{X,Y,Z}(x,y,z)$ is also preserved.  Thus, $H(X|Y), H(Y|X), H(Z|X)$ are preserved.

Next we bound the cardinality of the set $\mathcal{V}$.
For each $u' \in \mathcal{U}'$, we define the following $|\mathcal{X}|+2$ continuous functions of $p(x|u',v)$,\ $x \in \mathcal{X}$,
\begin{align*}
&f_{j}(p(x|u',v)) = p(x|u',v),\ j=1,\ldots,|\mathcal{X}|-1, \\
&f_{|\mathcal{X}|}(p(x|u',v)) = H(X,Y|U'=u',V=v),\\
&f_{|\mathcal{X}|+1}(p(x|u',v)) = H(Y|U'=u',V=v),\\
&f_{|\mathcal{X}|+2}(p(x|u',v)) = H(Z|U'=u',V=v).
\end{align*}
Similarly to the previous part in bounding $|\mathcal{U}|$, there exists a new random variable $V'|\{U'=u'\} \sim p(v'|u')$ such that $|\mathcal{V}'| =  |\mathcal{X}|+2$ and $p(x|u')$, $H(X,Y|U'=u',V)$, $H(Y|U'=u',V)$, and $H(Z|U'=u',V)$ are preserved.

By setting $V'' =(V',U')$ where $\mathcal{V}'' = \mathcal{V}' \times \mathcal{U}'$, we have that $U'-V''-X-(Y,Z)$ forms a Markov chain.

Furthermore, we have the following preservations by $V''$,
\begin{align*}
&H(X,Y|V'') \\
& = H(X,Y|V',U')\\
& \overset{(a)}{=} H(X,Y|V,U')\\
& \overset{(b)}{=} H(X,Y|V,U)\\
& \overset{(c)}{=} H(X,Y|V),
\end{align*}
where $(a)$ follows from preservation by $V'$, $(b)$ follows from preservation by $U'$, and $(c)$ follows from the Markov chain $U-V-X-(Y,Z)$.
Similarly, from preservation by $U'$ and $V'$, and the Markov chain $U-V-X-(Y,Z)$, we have that $H(Y|V'') =H(Y|V',U')=H(Y|V)$ and $H(Z|V'')=H(Z|V',U')=H(Z|V)$.

 Therefore, we have shown that $U \in \mathcal{U}$ and $V \in \mathcal{V}$ may be replaced by $U' \in \mathcal{U}'$ and $V'' \in \mathcal{V}''$ satisfying
 \begin{align*}
 |\mathcal{U}'|&= |\mathcal{X}|+3, \\
 |\mathcal{V}''| &= |\mathcal{U}'||\mathcal{V}'|= (|\mathcal{X}|+3)(|\mathcal{X}|+2),
 \end{align*}
and  preserving the terms $H(X,Y|V),H(Y|V),H(Z|V)$, and $H(Y|U)-H(Z|U)$.

\section{Proof of the Compression-leakage-mFAP Exponent Region in the Binary Example} \label{appendix:proof_example}
\emph{Achievability:}  Let $V$ be an output of a BSC($\alpha$) with input $X$. Then it follows from the expression of $\mathcal{R}_{1,X-Y-Z}$ that
\begin{align*}
 R &\geq I(X;V|Y) \\
  &\overset{(a)}{=}p\cdot(H(X)-H(X|V)) \\
  &\overset{(b)}{=}p\cdot(1-h(\alpha)),
\end{align*}
where $(a)$ follows since $Y=e$ with probability $p$, otherwise $Y=X$, and $(b)$ follows from the choice of $V$,
\begin{align*}
  L &\geq I(X;Z)+ I(X;V|Y)\\
  &\overset{(a)}{=} 1-H(X|Z) + p\cdot(1-h(\alpha))\\
  &\overset{(b)}{=} 1-((1-p)q +p)+ p\cdot(1-h(\alpha))\\
  &= (1-q)(1-p) + p\cdot(1-h(\alpha)),
\end{align*}
where $(a)$ follows from the bound on $R$ and $(b)$  follows since $Z=e$ with probability $(1-p)q +p$, otherwise $Z=X$.
\begin{align*}
   E &\leq I(Y;V|Z)\\
 &\overset{(a)}{=}  I(X;V|Z)-I(X;V|Y)\\
 &\overset{(b)}{=} ((1-p)q +p)\cdot I(X;V)- p\cdot(1-h(\alpha))\\
 &= q (1-p)(1-h(\alpha)),
\end{align*}
where  $(a)$  follows from the Markov chain $V-X-Y-Z$ and $(b)$ follows since $Z=e$ with probability $(1-p)q +p$, otherwise $Z=X$.

\emph{Converse:}
Let $(R,L,E)$ be an achievable tuple. We now prove that there exist $\alpha \in [0,1/2]$ satisfying the inequalities shown in the achievability above. From $\mathcal{R}_{1,X-Y-Z}$, we have the following bound on the compression rate $R$.
\begin{align*}
  R &\geq I(X;V|Y) \\
  &= p\cdot I(X;V)  \\
  & = p\cdot(1-H(X|V)).
\end{align*}
Since $0 \leq H(X|V) \leq H(X)=1$, and $h(\cdot)$ is a continuous one-to-one mapping from $[0,1/2]$ to $[0,1]$, there exists $\alpha \in [0,1/2]$ s.t. $H(X|V)=h(\alpha)$, and thus $R \geq p\cdot(1-h(\alpha))$. The bounds on $L$ and $E$ readily follow from $H(X|V)=h(\alpha)$.

\end{document}